%
%
%
%

\documentclass[runningheads,a4paper]{llncs}

\usepackage{amssymb}
\usepackage{graphicx}
\usepackage{verbatim}
\usepackage{triotex}
\usepackage{url}
\usepackage[shortlabels]{enumitem}

\newcommand{\impl}{\ensuremath{\rightarrow}}
\newcommand{\dimpl}{\ensuremath{\leftrightarrow}}
\newcommand{\satL}{\ensuremath{\TRIOop{\vDash}{L}{}}}
\newcommand{\NowST}{\ensuremath{\TRIOop{NowST}{}{}}}
\newcommand{\true}{\ensuremath{\top}}
\newcommand{\false}{\ensuremath{\bot}}
\newcommand{\defeq}{\ensuremath{\stackrel{\textrm{def}}{=}}}
\newcommand{\XTN}{\ensuremath{\textrm{X-TRIO}^{\textrm{F}}_\naturals}}

\DeclareRobustCommand{\Xop}[2]{\ensuremath{\ifthenelse{\not \equal{#2}{} }  {#1(#2)}  {#1}}}
\newcommand{\st}[1]{\Xop{st}{#1}}
\newcommand{\ns}[1]{\Xop{ns}{#1}}

\newcommand{\EncAsim}[1]{\Xop{\tau_f}{#1}}

\newcommand{\ST}{\ensuremath{s_p}}
\newcommand{\FL}{\ensuremath{f_p}}

\newcommand{\XTNdom}{\ensuremath{\nsDom{\naturals}_+}}

\def\endsomething[#1]{
\ifmmode{\tag*{#1}}\else{\unskip\nobreak\hfil%
\penalty50\hskip1em\null\nobreak\hfil#1%
\parfillskip=0pt\finalhyphendemerits=0\endgraf}\fi}
\newcommand{\proofend}{\endsomething[$\blacksquare$]}

\newif\iflong
\longfalse

\begin{document}

\mainmatter  

\title{Non-null Infinitesimal Micro-steps:\\a Metric Temporal Logic Approach}


%
%
\author{ Luca Ferrucci, Dino Mandrioli, Angelo Morzenti, Matteo Rossi }
\institute{Politecnico di Milano, Dipartimento di Elettronica e Informazione \\ 
Piazza Leonardo da Vinci 32 -- 20133 Milano, Italy\\ 
\texttt{(ferrucci|mandrioli|morzenti|rossi)@elet.polimi.it}}

\maketitle

\begin{abstract}
Many systems include components interacting with each other that evolve with possibly very different speeds.
To deal with this situation many formal models adopt the abstraction of ``zero-time transitions'', which do not consume time.
These however have several drawbacks in terms of naturalness and logic consistency, as a system is modeled to be in different states at the same time.
We propose a novel approach that exploits concepts from non-standard analysis to introduce a notion of micro- and macro-steps in an extension of the TRIO metric temporal logic, called X-TRIO.
We use X-TRIO to provide a formal semantics and an automated verification technique to Stateflow-like notations used in the design of flexible manufacturing systems.
\keywords{metric temporal logic, formal verification, flexible manufacturing systems, micro- and macro-steps, non-standard analysis}
\end{abstract}

\bibliographystyle{splncs03}
\section{Introduction}
\label{sec:Int}

In many approaches to modeling time-dependent systems, each instant of a temporal domain $T$ is associated with exactly one ``state''.
%
This view can come into question when a system includes computational components that perform calculations whose durations are negligible with respect to the dominant dynamics of the system.
This occurs typically in embedded systems where some computing device, whose dynamics evolve at the pace of microseconds, monitors and controls an environment whose dynamics is in the order of the seconds. 
Imagine, for example, a controller of a reservoir that takes decisions on resource management in a few milliseconds, and actuates them in a few minutes.


A common abstraction adopted in literature to deal with this situation, one that is also widely accepted in the practice of systems development, consists in introducing a notion of ``zero-time transition'', where a state change occurs in such a short time that it can be neglected w.r.t. the other types of system evolution.
In this view, the system can traverse different states in zero time, thus a time instant $t$ can be associated with more than one state (e.g., the controller above could be in states ``update variables'' and ``make decision'' at the same time).
Examples of formalisms in which zero-time transitions are allowed are (see \cite{FMMR10}): timed Petri nets where transitions can have null firing times; some timed versions of Statecharts whose semantics is defined as a sequence of micro- and macro-steps, where only the latter ones advance time; various versions of timed or hybrid automata which separate transitions that produce a state change in null time from transitions that only make time progress.
In some sense these notations split time modeling in two separate domains: a logical domain, that orders events in terms of their logical precedence (e.g. the controller updates the variables before deciding whether to turn a switch on or off) and a physical domain over which the \textit{t} variable ranges.

The notion of zero-time transition, or micro-step, is a useful abstraction, but it inevitably entails some risks from the point of view of naturalness of modeling and safe mathematical analysis.
Not only the fact that a system can be in different states at the same time is counterintuitive from the standpoint of the traditional dynamical system view where the state is a function of time, it also exposes to the risk of contradictory assertions about system timing properties.
In \cite{GMM99} we proposed a natural way to overcome this difficulty through non-standard analysis (NSA): the temporal domain is extended by introducing infinitesimals,
i.e., numbers that are strictly less than any positive standard one.
We exploited this idea by replacing zero-time transitions with transitions that take a non-null, infinitesimal time in the context of our metric temporal logic language TRIO.


In this paper we further pursue our approach based on adopting a nonstandard time domain for TRIO to formalize micro- and macro-steps in dynamical systems.
The key novelty consists of introducing in TRIO the next-time operator typical of various temporal logics. 
Our approach retains the metric view of time that is typical of TRIO, but it avoids associating a fixed time distance to the next-time operator: the new state defined by it is entered after the current one at a time distance that can be a standard positive number, in the case of a macro-step, or an 
infinitesimal one in the case of a micro-step.
With this natural approach we preserve the intuitive concept that time and system state progress ``together'', but we also provide a mathematical foundation to support analysis and verification at different time scales. 

This extension of TRIO, called X-TRIO, allows us to describe in a natural way the formal semantics of --usually semi-formal-- notations that are widely used in industrial practice, in which zero-time transitions are a key concept.
In particular, in this paper we focus on the Stateflow notation \cite{SFLOW} that is common in the design of controllers of manufacturing systems.
Besides naturalness and generality, however, we pursue the goal of providing fully automated tools supporting the analysis of the modeled systems.
This is achieved by translating a decidable fragment of the X-TRIO logic, one that is expressive enough to fully capture the semantics of the target notation, into the Propositional Linear Temporal Logic with Both future and past operators (PLTLB) that is amenable to automated analysis by existing tools such as \zot{} \cite{ase08}.

In the literature, other works \cite{BBCP12,BK09} have used NSA
to provide a formal and rigorous semantics to timing features of various kinds
of notations for system modeling. In \cite{BBCP12} NSA is used to describe a hybrid
system modeled in Simulink, in presence of cascaded mode changes. In \cite{BK09}, a
complete system theory is defined, adopting a theoretical approach to investigate
computability issues.


Since the introduction of Statechart (the language on which Stateflow is
based) several different semantics have been defined for it. The three most classical
ones, the fixpoint \cite{PS1991}, STATEMATE \cite{SCSEM}, and UML semantics, 
differ in the features adopted for step execution, and have been fully analyzed 
in \cite{Esh09}. In the present work we focus on Stateflow because of its widespread use in
industrial settings, but our approach is general enough to be adjusted to any of
the semantics defined for Statecharts or other state-based formalisms that use
the abstraction of micro- and macro-steps.


Notions of zero-time transitions, micro- and macro-steps appear very naturally 
when reasoning about computations of embedded systems, so, rather 
unsurprisingly, they arise in real-time temporal logics. Since the very early 
developments in this field, approaches were introduced that admit zero-time 
transitions at the price of associating multiple states to single time instants 
\cite{Ost89}. Our approach is akin to that of \cite{Koy90}, which introduces 
a general framework accomodating suitable time structures supporting the
notion of micro- and macro-steps, but does not address issues of decidability and verification.
The proposal in \cite{GH02DurCalc} provides notations for modeling micro-steps in the framework of 
Duration Calculus, which, unlike TRIO, is a logic based on intervals: it defines a decidable fragment of the
notation but does not give algorithms or build tools supporting verification.   
Other works are only partially connected to ours, as they deal with issues concerning the modeling and development of embedded systems at various time scales: \cite{HMP92DigitalClocks} and \cite{FR10-TOCL10} 
deal with issues 
of sampling and digitization, \cite{BH10} and \cite{CCM+91} discuss issues related with time granularity, and \cite{HQR99AssumeGuarantee} provides a refinement method based on assume-guarantee induction over 
different time scales. 



This paper is structured as follows.
In Section \ref{sec:X-TRIO} we define the X-TRIO logic and study its relevant properties.
Then, in Section \ref{sec:Stateflow} we use X-TRIO to provide a formal semantics to the Stateflow notation, and we use the translation defined in Section \ref{sec:SemEncoding} to perform automated verification of an example of Flexible Manufacturing System.
Section \ref{sec:concl} concludes and hints at possible extensions and enhancements of this work.

\section{The X-TRIO logic}
\label{sec:X-TRIO}

In this section we introduce the X-TRIO logic.
After some necessary background we define the syntax and semantics of the language.
Then, we study the relevant properties of the logic: we show the undecidability of X-TRIO in its general form, and we identify a subset whose satisfiability problem can be reduced to that of PLTLB, thus providing an effective mechanism to verify X-TRIO models.

\subsection{Background, syntax and semantics of X-TRIO}
\label{sec:Syntax}

The original TRIO language \cite{TRIO1} is a general-purpose specification language suitable for modeling real-time systems.
It is a temporal logic  supporting a metric on time.
TRIO formulae are built out of the usual first-order connectives, operators, and quantifiers, and the single basic modal operator, $\Dist{}$: for any formula $\phi$ and term $t$ indicating a time distance, the formula $\Dist{\phi,t}$ specifies that $\phi$ holds at a time instant whose distance is exactly $t$ time units from the current instant.
TRIO formulae can be interpreted both in discrete and dense time domains.

X-TRIO extends TRIO along two main lines.
First, the temporal domain $T$ is augmented with infinitesimal numbers (from the theory of non-standard analysis founded by A. Robinson \cite{NSA}):
intuitively, a number $\epsilon$ is infinitesimal if $\epsilon \geq 0$ and $\epsilon$ is smaller than any number in $T_{>0}$.
The original values of $T$ are classified as \emph{standard} and are characterized by predicate $st$; that is, $x$ is standard iff $\st{x}$ holds.
$T$ is augmented with infinitesimal numbers and all numbers resulting from adding and subtracting infinitesimal non-zero numbers to and from standard ones.
Predicate $\ns{x}$ denotes that $x$ is \emph{non-standard}; for each $x$, $\st{x}$ holds if and only if $\ns{x}$ does not hold.
Notice that 0 is the only infinitesimal standard number and that non-standard numbers are of the form $v \pm \epsilon$, where $\st{v}$ holds, and $\epsilon$ is infinitesimally greater than 0.
Then, NSA provides an axiomatization that allows one to apply all arithmetic operations and properties of traditional analysis in an intuitive way: for instance the sum of two standard numbers is standard, the sum of two infinitesimal numbers is an infinitesimal and the sum of an infinitesimal with a standard number is a non-standard number.
The theory of NSA introduces, in addition to the notion of infinitesimal numbers and operations on them, the notion of infinite numbers (which are, intuitively, greater than any value in $T$), plus a rich set of results that make NSA an appealing framework for reasoning on both familiar and new objects.
In this paper we exploit some of the terminology and concepts of NSA to provide an elegant characterization of zero-time steps, but we do not make use of the full power of the theory; for example, we do not deal with infinite numbers (i.e., we have that $\ns{x}$ iff $x = v \pm \epsilon$, with $\st{v}$ and $\epsilon$ infinitesimal), as they seem of little use when dealing with zero-time steps.

We assume $\reals$ as the original time domain $T$.
We denote the extension of $T$ with infinitesimal numbers as $\nsDom{T}$.
$\nsDom{T}$ is a totally ordered set of numbers.Throughout the paper we focus on subsets of $\nsDom{\reals}$.
In particular, we will consider the $\nsDom{\naturals}$ domain of naturals augmented with infinitesimal numbers.

The second major novelty of X-TRIO is the introduction of the \emph{next} operator $\X{}{}$ which is typically used to describe the evolution of dynamical systems as a sequence of discrete steps.
Unlike the traditional use of the operator in a metric setting, however, the time distance between two consecutive states is not implicitly assumed as a time unit; on the contrary it can be any standard or non-standard positive number.
Precisely, we introduce two different types of $\X{}{}$ operator, namely $\X{st}{}$ and $\X{ns}{}$.
Intuitively, the formula $\X{st}{\phi}$ is true in the current instant iff $\phi$ is true in the next state entered by the system and this occurs at a time instant that is a standard number; conversely, formula $\X{ns}{\phi}$ is true iff in the next state, $\phi$ is true and the occurrence time is a non-standard number.
We will use these two operators to distinguish between two typical ways of modeling system evolution: $\X{st}{}$ will formalize \emph{macro-steps} i.e. transitions that "consume real, tangible time", whereas $\X{ns}{}$ will describe \emph{micro-steps} which are often formalized as zero-time transitions.
\emph{Yesterday} operators $\Y{st}{}$ and $\Y{ns}{}$ are introduced in a similar manner.

The syntax of X-TRIO is defined as follows:
\vspace{-0.3cm}
\begin{align*}
&\phi\ :=\, p\, |\, \neg \phi\, |\, \phi_1 \wedge  \phi_2\, |\, \Dist{\phi,k}\, |\, \X{st}{\phi}\, |\, \X{ns}{\phi}\, |\, \Y{st}{\phi}\, |\, \Y{ns}{\phi}\, |\, \forall t.\tau \\
&\tau\ :=\, \phi\, |\, \Dist{\phi,t}\, |\, t = k\, |\, t < k\, |\, \tau_1 \wedge \tau_2\, |\, \neg \tau 
\end{align*}
\vspace{-1cm}

For the purposes of this paper we restrict the set of atomic propositions $AP$ to propositional variables $p$, 
and the set $V$ of temporal terms to variables $t$ and constants $k$.
Temporal terms $t$ take values in the time domain $\nsDom{T}$ and can appear only in closed formulae.
We leave first-order extensions of the logic to future work.
Symbols $\true$, $\false$, $\lor$, $\impl$, $\exists$, etc. are derived as usual.
We introduce the derived operators of X-TRIO in the same way as in TRIO.
The derived temporal operators used in this paper are shown in Table \ref{tab:XTrioOp}.
%
\begin{table}[b]
\begin{center}
\begin{tabular}{cccc} 
\hline\noalign{\smallskip}
OPERATOR & DEFINITION \\
\noalign{\smallskip}\hline\noalign{\smallskip}
\AlwF{}{\phi} & $\forall d (d \geq 0 \rightarrow \Dist{\phi,d})$ \\
\SomF{}{\phi} & $\exists d (d \geq 0 \land \Dist{\phi,d})$ \\
\WithinF{}{\phi, \delta} & $\exists d (0 \leq d \leq \delta \land \Dist{\phi,d})$ \\
\Until{}{\phi,\psi} & $\exists d \geq 0 (\Dist{\psi,d} \land \forall v (0 \leq v < d \rightarrow \Dist{\phi,v}))$ \\
\Since{}{\phi,\psi} & $\exists d \geq 0 (\Dist{\psi,-d} \land \forall v (-d < v \leq 0 \rightarrow \Dist{\phi,v}))$ \\
\noalign{\smallskip}\hline
\end{tabular}
\end{center}
\caption{X-TRIO derived temporal operators.}
\label{tab:XTrioOp}       
\end{table}

 A model-theoretic semantics for X-TRIO is defined by following a fairly standard path on the basis of a temporal structure $S= \langle \nsDom{T}, \beta, \nu, \sigma \rangle$, where:
\vspace{-1cm}
\begin{itemize}
	\item $\nsDom{T}$ is the time domain such that $\forall t \in \nsDom{T}$ it is $t \geq 0$.
	\item $\beta:\nsDom{T}\longrightarrow 2^{AP}$ is an interpretation function that associates each instant of time $t$ with the set of atomic propositions $\beta(t)$ that are true in $t$.
	\item $\nu:V\longrightarrow \nsDom{T}$ is an evaluation function that associates with each temporal term of the set $V$ a value in $\nsDom{T}$.
	\item $\sigma = \{ \sigma_i | i \in \naturals: \sigma_i \in \nsDom{T} \wedge \sigma_0 = 0 \wedge \forall j \in \naturals(j < i \Rightarrow \sigma_j < \sigma_i) \wedge \forall t \in \nsDom{T}(\sigma_i<t<\sigma_{i+1} \Rightarrow \beta(\sigma_i) = \beta(t))\}$ is the distinguishing element of X-TRIO temporal structure; it is a (possibly infinite) sequence of time instants starting from the initial instant 0, called \emph{History}.
	Intuitively, it represents the discrete sequence of instants when the system changes state; thus, the $\X{}{}$ operator represents a step moving from $\sigma_i$ to $\sigma_{i+1}$.
\end{itemize}
Then the satisfaction relation $\vDash$  of an X-TRIO formula $\phi$ by structure $S= \langle \nsDom{T}, \beta, \nu, \sigma \rangle$ at a time instant $i \in \nsDom{T}$ is defined as follows:

{\small
\begin{align*}
&S,i \vDash p \textrm{ iff } p \in \beta(i) \\
&S,i \vDash \neg \phi \textrm{ iff } S,i \nvDash \phi \\
&S,i \vDash \phi_1 \wedge \phi_2 \textrm{ iff } S,i \vDash \phi_1 \textrm{ and } S,i \vDash \phi_2 \\ 
&S,i \vDash \Dist{\phi,k} \textrm{ iff }  i+\nu(k) \in \nsDom{T} \textrm{ and } S,i+\nu(k) \vDash \phi & \\
&S,i \vDash \Dist{\phi,t} \textrm{ iff } i+\nu(t) \in \nsDom{T} \textrm{ and } S,i+\nu(t) \vDash \phi \\
&S,i \vDash \X{st}{\phi} \textrm{ iff there is } j\in \naturals \textrm{ s.t. } \sigma_j \leq i < \sigma_{j+1}, \st{\sigma_{j+1}} \textrm{ and } S,\sigma_{j+1} \vDash \phi \\
&S,i \vDash \X{ns}{\phi} \textrm{ iff there is } j\in \naturals \textrm{ s.t. } \sigma_j \leq i < \sigma_{j+1}, \ns{\sigma_{j+1}} \textrm{ and } S,\sigma_{j+1} \vDash \phi \\
&S,i \vDash \Y{st}{\phi} \textrm{ iff there is } j\in \naturals \textrm{ s.t. } \sigma_{j-1} < i \leq \sigma_{j}, j>0,  \st{\sigma_{j-1}} \textrm{ and }  S,\sigma_{j-1} \vDash \phi \\
&S,i \vDash \Y{ns}{\phi} \textrm{ iff there is } j\in \naturals \textrm{ s.t. } \sigma_{j-1} < i \leq \sigma_{j}, j>0, \ns{\sigma_{j-1}} \textrm{ and } S,\sigma_{j-1} \vDash \phi \\
&S,i \vDash \forall d.\phi  \textrm{ iff for all } \nu' \textrm{ that differ from } \nu \textrm{ at most for } d, \langle \nsDom{T}, \beta, \nu', \sigma \rangle,i \vDash \phi
\end{align*}
}
%
A formula $\phi$ is \emph{satisfiable} in a structure $S= \langle \nsDom{T}, \beta, \nu, \sigma \rangle$ when $S,0 \vDash \phi$.

In the rest of the paper, we focus our attention on a fragment of X-TRIO, which we name $\XTN$, that is sufficiently expressive for the purpose of providing Stateflow with a formal semantics and that is, under suitable conditions, decidable.
$\XTN$ formulae are interpreted on the temporal domain $\XTNdom \subset \nsDom{\reals}$ which includes exactly all numbers of the form $v+k\epsilon$, where $v,k \in \naturals$ and $\epsilon > 0$ is an infinitesimal \emph{constant} number fixed \emph{a priori}.
Thus, in $\XTNdom$, standard numbers are identified by the coefficient $k=0$.
$\XTN$ corresponds to the following syntactic fragment of X-TRIO, where $\epsilon$ is a constant and $\NowST$ is an operator with no arguments that is described below:
\begin{align*}
\phi\ :=\, & p\, |\, \neg \phi\, |\, \phi_1 \wedge  \phi_2\, |\, \Dist{\phi,1}\, |\, \Dist{\phi,-1}\, |\, \Dist{\phi,\epsilon}\, |\\
           & \Until{}{\phi_1, \phi_2}\, |\, \Since{}{\phi_1, \phi_2}\, |\ \X{st}{\phi}\, |\, \X{ns}{\phi}\, |\, \NowST
\end{align*}
In this fragment, $\Dist{\phi, 1+\epsilon}$ is an abbreviation for $\Dist{\Dist{\phi, \epsilon}, 1}$, and also $\Dist{\phi, 2} = \Dist{\Dist{\phi, 1}, 1}$, $\Dist{\phi, 2\epsilon} = \Dist{\Dist{\phi, \epsilon}, \epsilon}$, and so on.
Notice that the $\Until{}{}$ and $\Since{}{}$ operators of Table \ref{tab:XTrioOp} are primitive in $\XTN$, and we have the usual abbreviations $\SomF{}{\phi} = \Until{}{\true,\phi}$ and $\AlwF{}{} = \neg \SomF{}{\neg \phi}$.
As the syntax of $\XTN$ does not allow for variables, its temporal structures become triples of the form $S= \langle \nsDom{T}, \beta, \sigma \rangle$.
To distinguish between standard and non-standard instants, $\XTN$ introduces operator $\NowST$ such that $S,i \vDash \NowST \textrm{ iff } \st{i}$.
%

The restrictions introduced in $\XTN$, however, are not enough to make it decidable.
In fact, the following holds.

\begin{theorem}
\label{th:undec}
The satisfiability problem of the $\XTN$ logic is undecidable.
\end{theorem}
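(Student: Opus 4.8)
The plan is to prove undecidability by reduction from the halting problem of deterministic two-counter (Minsky) machines, which is undecidable. The key observation is that the domain $\XTNdom$ is order-isomorphic to $\naturals\times\naturals$ under the lexicographic order: each instant is a pair $(v,k)$ with $v,k\in\naturals$, where $v$ is the standard part and $k$ the coefficient of $\epsilon$. On this grid the available operators act as two independent unit moves plus a marker for one axis: $\Dist{\phi,1}$ shifts $(v,k)\mapsto(v{+}1,k)$ (and $\Dist{\phi,-1}$ its inverse when $v\ge 1$), $\Dist{\phi,\epsilon}$ shifts $(v,k)\mapsto(v,k{+}1)$, and $\NowST$ holds exactly at the points $(v,0)$. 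Crucially, $\Dist{\phi,1}$ preserves the micro-index $k$ while advancing the macro-index, so it aligns column $k$ of ``row'' $v$ with column $k$ of row $v{+}1$; this column alignment is what makes a faithful grid encoding possible. Note that the construction will not use the $\X{st}{}$, $\X{ns}{}$, $\Y{st}{}$, $\Y{ns}{}$ operators at all, so the History component can be left unconstrained and the result already holds for the metric, $\NowST$-equipped fragment.

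Given a machine $M$, I would build a formula $\phi_M$, satisfiable iff $M$ halts, as follows. Each macro-row $\{(v,k):k\ge 0\}$ encodes one configuration $C_v=(q_v,a_v,b_v)$: the control state is recorded at the row start $(v,0)$ and propagated along the whole row by the invariant $\AlwF{}{Q_q\rightarrow\Dist{Q_q,\epsilon}}$ (which, since forward $\epsilon$-steps never leave a row, localizes the active instruction to its row); the counters are written in unary by propositions $A,B$ marking the prefixes $\{1,\dots,a_v\}$ and $\{1,\dots,b_v\}$ of the positive micro-positions, with position $0$ reserved for control. A global \emph{prefix invariant}, expressible with forward $\epsilon$-steps alone (once a mark is absent at a non-standard point it stays absent, i.e. $\AlwF{}{(\neg A\wedge\neg\NowST)\rightarrow\Dist{\neg A,\epsilon}}$), guarantees that the marks of each row form an initial segment. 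The transition relation is then stated cell-wise by comparing a cell with its image one macro-unit ahead via $\Dist{\cdot,1}$: a zero-test inspects $\Dist{A,\epsilon}$ at the row start; an increment preserves all marks ($A\rightarrow\Dist{A,1}$), plants a single new mark just after the last one, and forbids any mark two positions further, so that the prefix invariant pins $a_{v+1}=a_v+1$; decrement is symmetric; and an unaffected counter is copied by $X\leftrightarrow\Dist{X,1}$. Finally $\phi_M$ fixes $C_0$ (initial state, empty counters), conjoins the transition constraints under $\AlwF{}{\cdot}$, and demands $\SomF{}{q_h}$, where $q_h$ marks the halt state.

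For correctness, since $M$ is deterministic the initialization and transition conjuncts force, by induction on $v$, every model of $\phi_M$ to encode the unique run of $M$ in its successive rows; the $\SomF{}{\cdot}$ conjunct then makes $\phi_M$ satisfiable exactly when that run reaches the halt state, i.e. exactly when $M$ halts. For the ``only if'' direction one also exhibits the intended valuation as a bona fide $\XTN$ structure, which is unproblematic because within every row the valuation changes at only finitely many positions, so a legitimate History exists (and is in any case irrelevant, the formula being $\X$/$\Y$-free). The reduction is evidently computable, so undecidability of $M$'s halting transfers to $\XTN$ satisfiability. I expect the \textbf{main obstacle} to be making the counter-update constraints genuinely airtight with the operators at hand: the micro-axis is forward-only (there is $\Dist{\cdot,\epsilon}$ but no $-\epsilon$) and each row is an infinite sequence of columns, so one must certify ``exactly one mark added in the correct place, and no spurious mark anywhere in the infinite tail'' using only forward comparisons together with the prefix invariant, and then verify that no cheating model can mis-encode a step. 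An alternative reduction from an origin-constrained tiling problem would use the same grid, but the forward-only micro-axis makes the prefix-based counter encoding cleaner than a neighbour-based tiling one.
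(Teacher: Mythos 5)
Your proposal is correct in its essentials and reaches the result by a genuinely different route than the paper. The paper also reduces from the halting problem for 2-counter machines, but it anchors the counters to the \emph{history} $\sigma$: counter values are the lengths of the finite blocks of non-standard instants that $\sigma$ visits between consecutive standard instants (finiteness coming for free from the discreteness of $\sigma$), with letters $E$/$O$ alternating the two counters over even/odd standard instants and letters $A$/$B$ partitioning each block, and the operators $\X{ns}{}$, $\X{st}{}$ playing an essential role in delimiting block boundaries. You instead ignore the history entirely and work on the fixed grid $\XTNdom \cong \naturals\times\naturals$, encoding one configuration per macro-row in unary and transporting it with $\Dist{\cdot,1}$, which preserves the micro-index; your prefix invariant correctly reduces the ``no spurious mark in the infinite tail'' worry you flag to a single forward check (mark at $a_v{+}1$, no mark at $a_v{+}2$, via $\Dist{\Dist{\cdot,\epsilon},\epsilon}$), and finiteness of the prefixes follows by induction from the empty initial row rather than from discreteness of $\sigma$. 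Two things your route buys: it proves undecidability already for the $\X{}{}$/$\Y{}{}$-free metric fragment with $\NowST$, a formally stronger statement than the paper's; and the grid picture makes the correctness induction cleaner. Two things to be careful about: first, your treatment of the history is right but should be stated as you do state it --- every model is constrained by your $\AlwF{}{}$-formulae at \emph{all} points of $\XTNdom$ (the Table~\ref{tab:XTrioOp} semantics of $\Until{}{}$ quantifies over the whole domain, not over $\sigma$), while for the converse direction the intended valuation changes at only finitely many points per row, so its change points enumerate as a legitimate $\omega$-sequence $\sigma$; second, your reliance on evaluating $\Dist{\cdot,1}$ at non-standard instants is legitimate only because Theorem~\ref{th:undec} concerns unrestricted $\XTN$ under the Section~\ref{sec:Syntax} semantics --- in the restricted fragment of Section~\ref{sec:encoding} the paper adopts the convention that $\Dist{\phi,1}$ is false at non-standard instants (see the proof of Lemma~\ref{lm:ns_inv}), which would break your column-alignment mechanism, so the scope of your argument should be flagged explicitly. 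Minor remaining bookkeeping (mutual exclusion and totality of the state letters along each row, a self-loop on the halt state so rows beyond the halting step remain constrained) is routine and of the same kind the paper itself leaves implicit.
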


The proof of Theorem \ref{th:undec}, which can be found in \ref{subsec:proof_undec}, is by reduction of the halting problem of the 2-counter machine.
%
In Section \ref{sec:encoding} we introduce a sufficient condition that makes $\XTN$ decidable, but still expressive enough for our purposes.

%

\subsection{A decidable fragment of X-TRIO and its encoding in PLTLB}
\label{sec:encoding}

In this section we show the decidability of $\XTN$, under suitable conditions, by reducing the satisfiability problem of $\XTN$ to that of PLTLB.
The encoding of the transformation has been implemented in the \zot{} satisfiability checker.

PLTLB extends classic LTL \cite{Sch02} with past operators; its syntax as it will be used in the rest of this paper is the following:
\begin{align*}
&\phi :=\, p |\, \neg \phi\, |\, \phi_1 \wedge \phi_2\, |\, \X{L}{\phi}\, |\, \Y{L}{\phi}\, |\, \Uinfix{L}{\phi_1}{\phi_2}\, |\, \Sinfix{L}{\phi_1}{\phi_2}
\end{align*}

The semantics of PLTLB is defined over discrete \emph{traces}, representing infinite evolutions over time of the modeled system.
A trace is an infinite word $\pi = \pi(0)\pi(1)\ldots$ over the finite alphabet $\Sigma=2^{AP}$, where each $\pi(i)$ represents the set of atomic propositions that are true in $i$.
$\pi^i$ denotes the suffix of $\pi$ starting from $\pi(i)$.
We denote the satisfiability relation of PLTLB with $\satL$.
The definition of $\satL$ is straightforward if one considers that, for any $\phi$, $\Y{L}{\phi}$ is false at 0 \cite{Sch02}.

As a first step to encode $\XTN$ into PLTLB, we restrict histories $\sigma$ according to the following constraints:
\begin{enumerate}[{C}1.,ref={C}\arabic*]

\item
\label{enum:c1}
Either all standard natural numbers, or a bounded interval thereof including 0 belong to $\sigma$.

\item
\label{enum:c2}
If $\sigma_{i+1}$ is non-standard ($\ns{\sigma_{i+1}}$), then $\sigma_{i+1} - \sigma_{i} = \epsilon$.

\end{enumerate}
%
%
These constraints are not strictly necessary to obtain decidability, but they are not overly restrictive and they simplify the encoding for our purposes.
Notice also that, if $\sigma_{i+1}$ is standard ($\st{\sigma_{i+1}}$), then between $\sigma_i$ and $\sigma_{i+1}$ there is an infinite sequence of nonstandard numbers $\sigma_{i} + \epsilon, \sigma_{i} + 2\epsilon, \ldots$ such that, for all $k \in \naturals$, $\beta(\sigma_{i} + k\epsilon) = \beta(\sigma_{i})$.

To reduce the satisfiability problem of $\XTN$ (which is in general undecidable) to that of PLTLB (which is decidable), we need to apply further restrictions to the former. The key to obtain decidability is to make the evaluation of this operator meaningful only in standard instants.
To this purpose, we use the operator $\NowST$, that evaluates to true only in standard instants. 
To simplify the encoding further with a limited cost in expressiveness, we also impose that the value of formulae is meaningful only in instants that are "covered" by the history $\sigma$.
In fact, by definition of $\sigma$ in Section \ref{sec:X-TRIO}, there can be instants $t \in \nsDom{T}$ such that, for all $i$, $\sigma_i < t$.
In this case, $\sigma$ shows a classic \emph{Zeno behavior}, where it accumulates at a finite instant, signaling a model that changes state infinitely often in a finite interval.
Then, by convention, we state that formulae that are evaluated after one such accumulation point are false.
This can be achieved by considering every subformula $\psi$ of an $\XTN$ formula $\phi$ as an abbreviation for $\psi \land \SomF{}{\X{st}{\true} \lor \X{ns}{\true}}$.

The basic idea of the encoding is, given an $\XTN$ formula $\phi$, to build a corresponding PLTLB formula $\EncAsim{\phi}$ such that each model $S= \langle \XTNdom, \beta, \sigma \rangle$ of $\phi$ corresponds to a trace $\pi$ that is a model of $\EncAsim{\phi}$ such that every $\sigma_i$ maps to an element $j$ of $\pi$ where $\beta(\sigma_i) = \pi(j)$.
Then, we represent the transition $\sigma_i \longmapsto \sigma_{i+1}$ through the operator $\X{L}{}$.
Constraints \ref{enum:c1} and \ref{enum:c2} guarantee that the difference between $\sigma_{i+1}$ and $\sigma_i = v + k\epsilon$ is either $1-k\epsilon$ or $\epsilon$, depending on whether $\sigma_{i+1}$ is standard or not.
The encoding "flattens" the history $\sigma$ over $\pi$: to distinguish between standard and non-standard instants, we introduce a PLTLB propositional letter $\ST$ that marks elements of trace $\pi$ that correspond to a standard instants.
We also need to introduce a ``filling'' element in $\pi$ whenever in $\sigma$ there are two elements $\sigma_i$, $\sigma_{i+1}$ that are both standards, i.e. between two elements in $\pi$ that are marked as $\ST$ (see the proof  of Theorem \ref{th:correctEncAsim} in Appendix \ref{subsec:proof_correctEncAsim} for more details).
Filling elements are marked in $\pi$ through predicate $\FL$.



The translation schema $\EncAsim{}$ of Table \ref{tab:enc} transforms an $\XTN$ formula $\phi$ into an equally satisfiable PLTLB formula $\phi_L$.
\begin{table}[tb]
$$
\begin{array}{l}
  \begin{array}{ll}
  \EncAsim{p} = p_L \qquad \qquad \qquad \EncAsim{\NowST} = \ST \qquad \qquad & \EncAsim{\Dist{\phi,0}} = \EncAsim{\phi} \\
  \EncAsim{\neg \phi} = \neg \EncAsim{\phi}  & \EncAsim{\phi_1 \wedge \phi_2} = \EncAsim{\phi_1} \wedge \EncAsim{\phi_2}  
  \end{array}\\
  \begin{array}{ll}
  \EncAsim{\X{ns}{\phi}} = \X{L}{\EncAsim{\phi} \wedge \neg \ST} \qquad &  \EncAsim{\X{st}{\phi}} = \X{L}{(\EncAsim{\phi} \wedge \ST) \lor (\FL \land \X{L}{\EncAsim{\phi}})}
  \end{array}\\
  \begin{array}{l}
  \EncAsim{\Dist{\phi,\epsilon}} = \X{L}{\EncAsim{\phi} \wedge \neg \ST} \vee (\X{L}{\ST} \wedge \EncAsim{\phi}) \\
  \EncAsim{\Dist{\phi,1}} = \X{L}{\Uinfix{L}{\neg \ST}{(\EncAsim{\phi} \wedge \ST)}} \\
  \EncAsim{\Dist{\phi,-1}} = \ST \land \Y{L}{\Sinfix{L}{\neg \ST}{(\ST \land \EncAsim{\phi})}} \\
  \EncAsim{\Until{}{\phi,\psi}} = \Uinfix{L}{\EncAsim{\phi}}{\EncAsim{\psi}} \\
  \EncAsim{\Since{}{\phi,\psi}} = \Sinfix{L}{\EncAsim{\phi}}{(\X{L}{\neg \ST} \land \EncAsim{\psi})}\ \lor \Sinfix{L}{\EncAsim{\phi}}{(\X{L}{\ST} \land \EncAsim{\phi} \land \EncAsim{\psi})} \\
  \end{array}
\end{array}
$$
\caption{Translation schema $\EncAsim{}$.}
\label{tab:enc}
\end{table}

%
Schema $\EncAsim{}$ is completed by the assertions (A1)

\begin{equation}
\begin{array}{ll}
\ST \land \G{L}{(\ST \impl \X{L}{\FL \lor \neg \ST}) \land (\FL \impl (\Y{L}{\ST} \land \neg \ST \land \X{L}{\ST}))}
\end{array}
\end{equation}

which imposes that predicate $\ST$ holds in $\pi(0)$ and that $\FL$ always appears between two consecutive $\ST$, and (A2) 

\begin{equation}
\begin{array}{ll}
\G{L}{\FL \impl (\bigwedge_{p \in AP} p \dimpl \Y{L}{p})}
\end{array}
\end{equation}

which states that propositions do not change values between two standard instants $\sigma_i$ and $\sigma_{i+1}$.
The following result holds (see Section \ref{subsec:proof_correctEncAsim} for the proof).

\begin{theorem}
\label{th:correctEncAsim}
Given an $\XTN$ formula $\phi$, there is a structure $S=\langle \XTNdom, \beta, \sigma \rangle$ such that $S,0 \vDash \phi$ iff there exists a trace $\pi$ such that $\pi\ \satL \EncAsim{\phi} \land (\mathrm{A1}) \land (\mathrm{A2})$.
\end{theorem}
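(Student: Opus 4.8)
The plan is to prove the biconditional by exhibiting an explicit correspondence between $\XTN$ structures $S=\langle \XTNdom, \beta, \sigma\rangle$ and PLTLB traces $\pi$, and then arguing that this correspondence is compatible with satisfaction of every syntactic construct of $\XTN$ under the translation $\EncAsim{}$. First I would make precise the map from a history $\sigma$ to a trace $\pi$: the essential idea already stated in the text is that the encoding ``flattens'' $\sigma$. Concretely, by constraints \ref{enum:c1} and \ref{enum:c2}, consecutive history instants differ by either $\epsilon$ (a micro-step, giving a non-standard $\sigma_{i+1}$) or by $1-k\epsilon$ (a macro-step, landing on the next standard integer). I would define $\pi$ so that each $\sigma_i$ maps to some position $j$ with $\pi(j)\supseteq\beta(\sigma_i)$ enriched with the bookkeeping letters: position $j$ carries $\ST$ iff $\st{\sigma_i}$, and whenever two consecutive history elements $\sigma_i,\sigma_{i+1}$ are both standard (a macro-step with no intervening micro-steps) I insert one \emph{filling} position marked $\FL$ between them, whose atomic-proposition content is forced equal to that of the preceding standard instant. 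This is exactly what assertion (A1) enforces structurally and (A2) enforces on the propositional content, so I would first verify that the two constructions agree: any $\pi$ built this way satisfies (A1)$\land$(A2), and conversely any $\pi$ satisfying (A1)$\land$(A2) is the flattening of a unique history $\sigma$ obeying \ref{enum:c1}--\ref{enum:c2}.

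With the bijective correspondence in hand, the heart of the argument is a structural-induction lemma: for every subformula $\psi$ of $\phi$ and every history index $i$, if $\sigma_i$ maps to trace position $j$, then $S,\sigma_i \vDash \psi$ iff $\pi^j \satL \EncAsim{\psi}$. I would proceed by induction on the structure of $\psi$ following the clauses of Table \ref{tab:enc}. The Boolean cases ($p$, $\neg$, $\wedge$) and $\NowST$ are immediate from the definition of the correspondence (recall $\EncAsim{\NowST}=\ST$ and $\ST$ marks exactly the standard instants). The two next-operators are the characteristic cases: for $\X{ns}{\phi}$ the successor in $\sigma$ is non-standard, which in $\pi$ is precisely the next position that is \emph{not} marked $\ST$, matching $\X{L}{\EncAsim{\phi}\wedge\neg\ST}$; for $\X{st}{\phi}$ the successor is standard, reached in $\pi$ either directly (when position $j{+}1$ is already $\ST$) or across a single $\FL$ position (when a filling element was inserted), which is exactly the disjunction $\X{L}{(\EncAsim{\phi}\wedge\ST)\lor(\FL\land\X{L}{\EncAsim{\phi}})}$.

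The metric clauses $\Dist{\phi,\epsilon}$, $\Dist{\phi,1}$, $\Dist{\phi,-1}$ require the most care, because in $\XTN$ these speak about arithmetic displacement in $\XTNdom$ whereas in $\pi$ everything must be expressed via the discrete $\X{L}{},\Y{L}{},\Uinfix{L}{}{},\Sinfix{L}{}{}$ operators. For $\Dist{\phi,\epsilon}$ I would observe that moving forward by $\epsilon$ from $\sigma_i$ reaches the next instant, which is non-standard unless $\sigma_i$ was the last micro-step before a standard integer (the $\X{L}{\ST}$ disjunct handles this boundary). For $\Dist{\phi,1}$, advancing one full standard unit corresponds in $\pi$ to skipping over the intervening run of non-standard/filling positions until the next $\ST$-marked position, which is what $\X{L}{\Uinfix{L}{\neg\ST}{(\EncAsim{\phi}\wedge\ST)}}$ captures; $\Dist{\phi,-1}$ is the symmetric past version, and I would lean on the remark in the text that $\Dist{\phi,1+\epsilon}$, $\Dist{\phi,2}$, etc.\ are abbreviations, so only these unit cases need direct treatment. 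The $\Until{}{}$ and $\Since{}{}$ cases follow by unfolding their semantics over the dense displacement and matching them to the discrete trace, the $\Since{}{}$ translation splitting into two disjuncts to account for whether the witness instant is reached just before a non-standard or a standard successor. The main obstacle, in my estimation, will be managing these boundary conditions uniformly---in particular ensuring the induction hypothesis is applied at the \emph{correct} trace position in the presence of filling elements, and confirming that the global convention that formulae evaluated past a Zeno accumulation point are false (encoded by conjoining $\SomF{}{\X{st}{\true}\lor\X{ns}{\true}}$) is preserved on both sides, so that no spurious truth value survives at positions not genuinely covered by $\sigma$.

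Finally, instantiating the induction lemma at $\psi=\phi$ and $i=0$ (where $\sigma_0=0$ maps to $\pi(0)$, which is $\ST$-marked as required by (A1)) yields $S,0\vDash\phi$ iff $\pi\satL\EncAsim{\phi}$; combining this with the two-way structural correspondence between histories and traces satisfying (A1)$\land$(A2) gives the stated equivalence.
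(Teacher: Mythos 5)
Your overall architecture---flatten $\sigma$ into $\pi$ with $\ST$/$\FL$ bookkeeping, then a structural induction over the clauses of Table \ref{tab:enc}---is indeed the paper's architecture, but your inductive statement is too weak to carry the metric cases, and this is a genuine gap rather than a presentational one. You state the induction hypothesis only at history points: ``if $\sigma_i$ maps to trace position $j$, then $S,\sigma_i \vDash \psi$ iff $\pi^j \satL \EncAsim{\psi}$''. However, $\XTN$ semantics evaluates subformulae at instants of $\XTNdom$ that are \emph{not} in $\sigma$: when $\st{\sigma_{i+1}}$, the interval $(\sigma_i,\sigma_{i+1})$ contains the infinitely many non-standard instants $\sigma_i+\epsilon, \sigma_i+2\epsilon,\ldots$, and the semantics of $\Dist{\psi,\epsilon}$, $\Until{}{}$ and $\Since{}{}$ quantifies over all of them. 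Your own case analysis silently relies on, e.g., $\psi$ holding at $\sigma_i+\epsilon$ iff it holds at $\sigma_i$ (to justify the disjunct $\X{L}{\ST}\land\EncAsim{\psi}$ in $\EncAsim{\Dist{\psi,\epsilon}}$), and on the single $\FL$ position faithfully standing in for the \emph{entire} gap inside the translated until/since. Assertion (A2) gives this only for atomic propositions; for compound $\psi$ it is exactly the content of the paper's Lemma \ref{lm:ns_inv}, proved by its own structural induction: every $\XTN$ formula has a constant truth value across the non-standard instants of $[\sigma_i,\sigma_{i+1})$ when $\st{\sigma_{i+1}}$ (and note the claim holds only for non-standard representatives---$\NowST$ already separates $\sigma_i$ from $\sigma_i+\epsilon$---so it genuinely needs proof, not just an appeal to (A2)). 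Without this lemma, or an equivalent strengthening of your induction hypothesis to \emph{all} instants covered by $\sigma$, your cases for $\Dist{\psi,\epsilon}$, $\Dist{\psi,1}$, $\Until{}{}$ and the two-disjunct $\Since{}{}$ translation cannot be closed; also, your gloss that moving by $\epsilon$ from $\sigma_i$ ``reaches the next instant'' is false precisely when $\st{\sigma_{i+1}}$, since $\sigma_i+\epsilon$ then lies strictly inside the gap and outside the history.

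A second, lesser gap concerns the Zeno side. You treat the convention that subformulae abbreviate $\psi \land \SomF{}{\X{st}{\true} \lor \X{ns}{\true}}$ as something to ``confirm'' at the end; the paper makes it a proved statement, Lemma \ref{lm:zeno}: two structures sharing the same history $\sigma$ and agreeing on $\beta$ before the accumulation point satisfy the same formulae at $0$. This matters because your claimed correspondence is \emph{not} bijective at the level of structures: $\beta$ is unconstrained at and after an accumulation point, so many structures flatten to one trace, and in the right-to-left direction the structure rebuilt from $\pi$ determines $\beta$ only on the instants covered by $\sigma$. Lemma \ref{lm:zeno} is what renders that arbitrariness harmless, and your proof plan should state and prove it (or an analogue) rather than leave it as a closing check.
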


From translation schema $\EncAsim{}$ and Theorem \ref{th:correctEncAsim} we can prove the following.

\begin{theorem}
The satisfiability problem for  $\XTN$ as restricted in this section is decidable and PSPACE-complete.
\end{theorem}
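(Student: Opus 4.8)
The plan is to read off both the decidability and the complexity bound from Theorem \ref{th:correctEncAsim}, which reduces $\XTN$ satisfiability to PLTLB satisfiability, and to combine it with the well-known PSPACE-completeness of PLTLB (resp.\ LTL) satisfiability \cite{Sch02}. Decidability is immediate: by Theorem \ref{th:correctEncAsim} an $\XTN$ formula $\phi$ has a model iff the PLTLB formula $\EncAsim{\phi} \wedge (\mathrm{A1}) \wedge (\mathrm{A2})$ is satisfiable, and PLTLB satisfiability is decidable. It then remains to establish the two complexity bounds separately.

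For membership in PSPACE I would bound the size of the target formula and then invoke the polynomial-space decision procedure for PLTLB. The one subtlety is that the schema $\EncAsim{}$ of Table \ref{tab:enc} \emph{replicates} subformula encodings: $\EncAsim{\X{st}{\phi}}$ and $\EncAsim{\Dist{\phi,\epsilon}}$ each contain two copies of $\EncAsim{\phi}$, and $\EncAsim{\Since{}{\phi,\psi}}$ contains three copies of $\EncAsim{\phi}$, so a naive textual expansion could grow exponentially with nesting. I would avoid this by a standard structure-sharing (definitional) encoding: introduce a fresh propositional letter $x_\psi$ for each of the linearly many subformulae $\psi$ of $\phi$ and conjoin the global equivalences $\G{L}{(x_\psi \dimpl \widehat{\psi})}$, where $\widehat{\psi}$ is the right-hand side of the matching row of Table \ref{tab:enc} with each immediate subformula encoding replaced by its own letter. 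This yields a PLTLB formula of size polynomial in $|\phi| + |AP|$ that is equisatisfiable with $\EncAsim{\phi} \wedge (\mathrm{A1}) \wedge (\mathrm{A2})$, the assertions being themselves polynomial ((A2) is a conjunction over $AP$). Running the PSPACE algorithm on this formula gives the upper bound; equivalently, one observes that the set of \emph{distinct} subformulae of $\EncAsim{\phi} \wedge (\mathrm{A1}) \wedge (\mathrm{A2})$ is polynomial in $|\phi|$ and that the automata-theoretic procedure for PLTLB runs in space polynomial in this number.

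For PSPACE-hardness I would reduce from satisfiability of plain future LTL, which is already PSPACE-hard. Given an LTL formula $\alpha$ over the atoms of $\phi$, I would conjoin a constraint forbidding micro-steps, e.g.\ $\AlwF{}{\neg \X{ns}{\true}}$, which together with constraints \ref{enum:c1}--\ref{enum:c2} forces the history to be $\sigma = 0,1,2,\dots$ so that $\beta$ is constant on each interval $[n,n+1)$. Under this restriction the standard instants are exactly the naturals, $\X{st}{}$ behaves as the LTL next on them (since $\X{st}{\phi}$ at $\sigma_n = n$ holds iff $\phi$ holds at $\sigma_{n+1}=n+1$), and the dense-time $\Until{}{}$ collapses to the discrete until because no proposition varies across the intervening infinitesimal instants. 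Translating $\alpha$ homomorphically, sending the LTL next to $\X{st}{}$ and $\phi\,U\,\psi$ to $\Until{}{\widehat{\phi},\widehat{\psi}}$, therefore produces in polynomial time an $\XTN$ formula that is satisfiable iff $\alpha$ is.

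The main obstacle is the PSPACE \emph{upper} bound, and specifically the control of formula size: because $\EncAsim{}$ duplicates subformula encodings, one must use structure sharing or fresh atoms to keep the reduction polynomial rather than merely elementary, and one must check that this preserves equisatisfiability. The hardness direction and bare decidability are comparatively routine once Theorem \ref{th:correctEncAsim} and the complexity of PLTLB are in hand.
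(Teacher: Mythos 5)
Your overall route is the same as the paper's: the paper states this theorem without an explicit proof, deriving it directly from the translation schema of Table \ref{tab:enc}, Theorem \ref{th:correctEncAsim}, and the known PSPACE-completeness of PLTLB. Your decidability and membership arguments are correct, and the membership half supplies a detail the paper silently skips: since the rows of $\EncAsim{}$ for $\X{st}{}$, $\Dist{\phi,\epsilon}$ and $\Since{}{}$ duplicate sub-encodings, naive expansion is exponential in the nesting depth, and either your definitional renaming (fresh letters $x_\psi$ with equivalences $\G{L}{(x_\psi \dimpl \widehat{\psi})}$, sound because the equivalences are asserted at every position of the $\omega$-trace, so the renamed formula is equisatisfiable) or your observation that the number of \emph{distinct} subformulae of $\EncAsim{\phi} \land (\mathrm{A1}) \land (\mathrm{A2})$ is linear in $|\phi|+|AP|$ (so the closure-based polynomial-space procedure applies) repairs this.

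The hardness direction, however, has a genuine bug. The conjunct $\AlwF{}{\neg\X{ns}{\true}}$ together with \ref{enum:c1}--\ref{enum:c2} does not force $\sigma = 0,1,2,\ldots$, because \ref{enum:c1} explicitly also allows $\sigma$ to be a \emph{bounded} interval of naturals, i.e., a finite history. Under the paper's convention that every subformula $\psi$ abbreviates $\psi \land \SomF{}{\X{st}{\true} \lor \X{ns}{\true}}$, every subformula is false at and after the last element of $\sigma$, while negation acts classically at earlier instants; truncated models can therefore satisfy translations of LTL-unsatisfiable formulae. Concretely, $\neg\X{L}{(p \lor \neg p)}$ is unsatisfiable over infinite traces, yet its image $\neg\X{st}{(p \lor \neg p)}$, with the convention applied, holds at $0$ in the structure with $\sigma = \{0,1\}$: the wrapped argument of $\X{st}{}$ is false at the ``dead'' instant $1$, so $\X{st}{}$ fails at $0$ and its negation succeeds there, while $\AlwF{}{\neg\X{ns}{\true}}$ also holds in this structure. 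The fix is one line: conjoin $\AlwF{}{\X{st}{\true}}$ instead, which both forbids micro-steps and excludes the bounded branch of \ref{enum:c1}, forcing $\sigma = \naturals$; then the convention is vacuously satisfied everywhere, your collapse of the dense until onto the discrete until (which is exactly what Lemma \ref{lm:ns_inv} licenses, since $\beta$ is constant on each $[n,n+1)$) is valid, and the reduction becomes correct.
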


\section{Exploiting X-TRIO to analyze Stateflow diagrams}
\label{sec:Stateflow}

In this section we present an application of $\XTN$ to provide the Stateflow notation with a formal semantics that includes a precise, metric notion of time; this allows us 
to introduce metric constraints in the notation and
to formally analyze real-time requirements and properties.
We exploit the $\XTN$-based semantics of Stateflow to perform automated formal verification of some properties of interest of the controller of a Flexible Manufacturing System (FMS), which is used in the section as an example to illustrate the Stateflow notation.

\subsection{Stateflow diagrams and their semantics in $\XTN$}
\label{sec:SemEncoding}

The Stateflow notation is a variation of Statecharts; it describes finite state machines performing discrete transitions between states in a simple and intuitive way.
In a nutshell, a Stateflow diagram is composed of: (i) a finite set of typed variables $V$ partitioned into input ($V_I$), output ($V_O$), and local ($V_L$) variables; input and output events are represented, respectively, through Boolean variables of $V_I$ and $V_O$; (ii) a finite set of states $S$ which can be associated with \emph{entry}, \emph{exit} and \emph{during} actions, which are executed, respectively, when the state is entered, exited, or throughout the permanence of the system in the state; (iii) a finite set of transitions, $H$, that may include guards (i.e., constraints) on the variables of $V$ and actions.
An \emph{action} is the assignment of the value of an expression over constants and variables of $V$ to a non-input variable.
We assume all variables in $V$ to take values in a finite domain, which we represent by $D_V$.

We illustrate the notation through the example of a robotic cell composed of a robot arm that loads and unloads various parts on two machines, $M_1$ and $M_2$.
The cell is served by a conveyor belt, which provides pallets to be processed.
There are two types of pallets, $A$ and $B$, which are precessed, respectively, by machine $M_1$ and by machine $M_2$.
After processing, the finished parts are discharged from the cell by means of the conveyor out belt. Figure \ref{fig:Robot} shows a Stateflow diagram describing the behavior of the robot arm.

\begin{figure*}[hbtf]
\centering
\includegraphics[height=0.55\textheight]{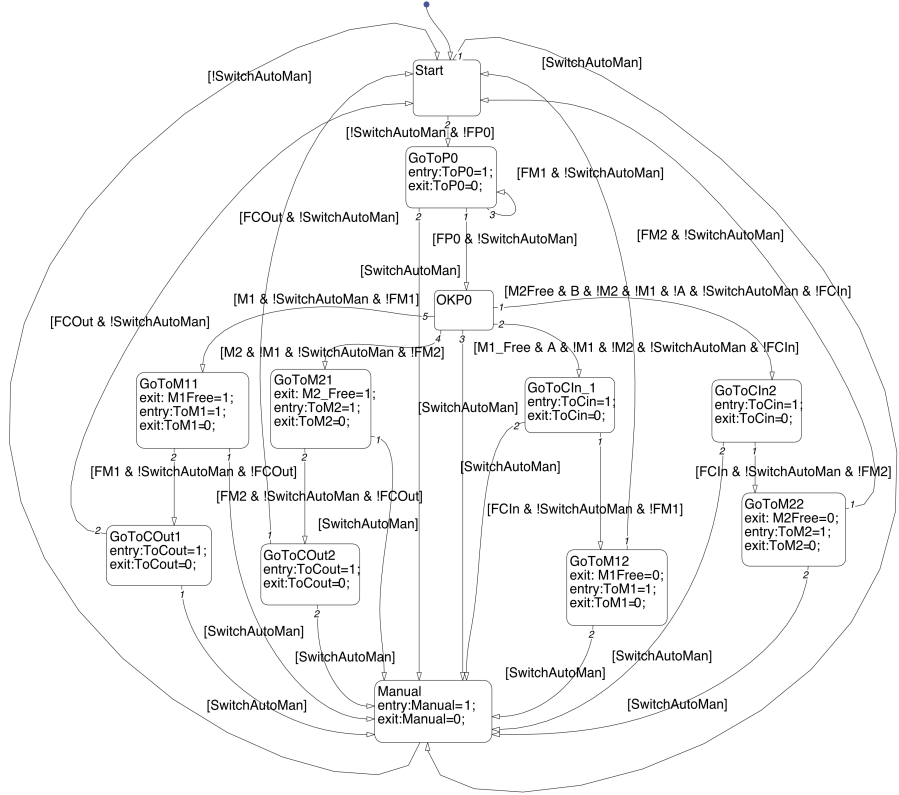}
\caption{Stateflow diagram of the controller of the robotic arm}
\label{fig:Robot}       
\end{figure*}

At any time, the robot arm can switch from automatic to manual mode or from manual to automatic mode upon a suitable command from the operator. For example, in the graph of Figure \ref{fig:Robot}, the transition from state \emph{GoToP0} to state \emph{OKP0} is enabled when a photocell signals that the robot arm has reached the central position \emph{P0}, setting the input variable \emph{FP0}.

\cite{SFLOW} presents the complete, informal, specification of Stateflow diagrams, but it does not provide a precise definition of their semantics. Our one is based on the STATEMATE semantics of Statecharts \cite{SCSEM}.

Stateflow semantics hinges on the concept of \emph{run}, which represents the reaction of the system to a sequence of input events.
A run is a sequence of \emph{configurations}; each configuration $\langle s, \nu \rangle$ pairs the current state $s \in S$ with an evaluation function $\nu : V \rightarrow D_V$ representing the current values of the variables.
The configuration changes only when an enabled transition is executed.
An enabled transition must be executed, which entails that a Stateflow model must be internally deterministic.
Input events, however, occur in a nondeterministic manner, so the model overall is nondeterministic.

The semantics of time evolution in Statecharts/Stateflow diagrams has proven difficult to pin down precisely, and different solutions have been proposed in the literature (e.g., \cite{RM}).
Our model is of the so-called \emph{run-to-completion} variety.
In this model the system reacts to the input events by performing a sequence of reactions (\emph{macro-steps}).
Within every macro-step, a maximal set of enabled transitions (\emph{micro-steps}) is selected and executed based on the events generated in the previous macro-step.
Micro-steps are executed infinitely fast, with time advancing only at macro-step boundaries, when the system reaches a \emph{stable} configuration, i.e., in which no transition is enabled.
In other words, micro-steps take zero time to execute; when no transition is enabled, time advances and the configuration changes when a new input event is received from the environment. As for STATEMATE, components sense input events and data only at the beginning of macro-steps and communicate output events and data only at their end. 
In the semantics outlined above each run identifies a sequence of time instants $\{t_i\}_{i \in \naturals}$, one for each macro-step, hence the time domain is discrete.
This is consistent with the underlying physical model of our test case, as the PLCs on which FMS control solutions are built are governed by discrete clocks.
In a sense, each macro-step corresponds to a \emph{clock cycle} of the modeled PLC. 

For example, if, at the beginning of a macro-step, the robot arm of Figure \ref{fig:Robot} is in position \emph{P0} (i.e., in state \emph{OKP0}) and a pallet of type $A$ is to be delivered to machine $M_1$, the transition between states \emph{OKP0} and \emph{GoToCIn1} is enabled, so the robot arm executes a micro-step and the output variable \emph{ToCIn} is set to true.
At this point, the whole system has reached a stable state, since the robot arm must wait for machine $M_1$ to terminate processing the pallet.
The termination event is modeled by setting the input variable \emph{FCIn} to true.

We now formalize the semantics of Stateflow diagrams through $\XTN$ formulae.
As the domain $D_V$ of Stateflow variables is assumed to be finite, it can be represented through a set of propositional letters: given a variable $v \in V$ and a value $k \in D_V$ when $v_k$ is true this represents that the value of $v$ is $k$.
Similarly for the state space $S$.
For readability, we write $v = k$ instead of $v_k$.

For each Stateflow transition $H_i : {s_i} \stackrel{g_i/a_i}{\rightarrow} {s'_i}$ from state $s_i$  state $s'_i$ with guard $g_i$ and action $a_i$, we introduce the following formula:
\begin{equation}
\begin{array}{ll}
\label{eq:8}
\AlwF{}{(\gamma_i \wedge s = s_i) \impl \X{ns}{s = s'_i} \wedge \alpha_i \wedge \alpha_{ex_{s_i}} \wedge \alpha_{en_{s'_i}}}
\end{array} 
\end{equation}
where $\gamma_i$ is an $\XTN$ formula encoding guard $g_i$, and $\alpha_i$, $\alpha_{ex_{s_i}}$ and $\alpha_{en_{s'_i}}$ are $\XTN$ formulae encoding, respectively, the transition action $a_i$, and the \emph{entry} and \emph{exit} actions of states ${s_i}$ and ${s'_i}$.
Formula \eqref{eq:8} formalizes the execution of a micro-step: it asserts that if the current state is $s_i$ and the transition condition $\gamma_i$ holds in the current configuration, then in the next micro-step the active state must be $s'_i$ and the $entry$ actions of $s'_i$ and the $exit$ actions of $s'_i$ are executed. Thus, operator $\X{ns}{}$ replaces a zero-time transition. If no transition is enabled, the configuration does not change, which is captured by the following formula:
\begin{equation}
\begin{array}{ll}
\label{eq:9}
 \AlwF{}{\bigwedge_{i=1}^{|H|} \neg(\gamma_i \wedge s = s_i) \impl NOCHANGE} 
\end{array} 
\end{equation}
where subformula $NOCHANGE$, which is not further detailed for space reasons, asserts that in the next micro-step the current state and the values of all output and local variables do not change.

The time advancement of our semantics is modeled through operator $\X{st}{}$: every time the system reaches a stable state (where no transition is enabled), the time advances to the next standard number. This is captured by the formula:
\begin{equation}
\label{eq:timeadv}
 \AlwF{}{\bigwedge_{i=1}^{|H|} \neg(\gamma_i \wedge s = s_i) \dimpl \X{st}{\true}}.
\end{equation}

The complete definition of the behavior of the transitions of the Stateflow diagram is given by $\left(\bigwedge_{i=1}^{|H|} (\ref{eq:8})_i\right) \land (\ref{eq:9}) \land (\ref{eq:timeadv})$.

Finally, we introduce a formula asserting that input variables $V_I$ change values only at the beginning of a macro-step, i.e. when the system is in a standard instant of time.
In other words, if the next time instant is non-standard, then the values of the input variables must be the same as those in the current instant:

\begin{equation}
\begin{array}{ll}
\label{eq:10}
\AlwF{}{\X{ns}{\true} \impl (\bigwedge_{v \in V_I, x \in D_V} v=x \impl \X{ns}{v=x})}
\end{array}
\end{equation}

The formula $SYS$ encoding the behavior of the overall system is given by the conjunction of formulae $\bigwedge_{i=1}^{|H|} (\ref{eq:8})_i$, (\ref{eq:9}-\ref{eq:10}), plus others not shown for brevity.
Formula $SYS$ characterizes precisely the runs of the corresponding Stateflow diagram, that is, it holds exactly for the runs modeled through the diagram.

\subsection{System properties verification and experimental results}
\label{sec:experiments}

The formalization introduced in Section \ref{sec:SemEncoding} has been implemented in the \zot{} tool to perform the verification of some typical real-time properties of the example FMS system.
\zot{} \cite{ase08} is a bounded satisfiability checker which supports the verification of PLTLB models.
It solves satisfiability (and validity) problems for PLTLB formulae by exploiting Satisfiability Modulo Theories (SMT) \cite{SMT} solvers.
Through \zot{} one can check whether stated properties hold for the system being analyzed (or parts thereof) or not; if a property does not hold, \zot{} produces a counterexample that violates it. 



As a first example, we check that the modeled system does not have Zeno runs, which would make it unrealizable. The system shows a Zeno behavior if, from a certain point on, ``real'' time does not advance, i.e., no macro-steps are performed.
The \emph{presence} of Zeno runs is formalized as follows:
\begin{equation} 
\begin{array}{ll}
\label{eq:zeno_runs}
\SomF{}{\AlwF{}{\X{ns}{\true}}}
\end{array}
\end{equation}

Formula (\ref{eq:zeno_runs}) states that, from a certain instant on, the clock does not tick any more, i.e. the trace presents an infinite sequence of non-standard instants.
We checked through the \zot{} tool that formula $SYS \land (\ref{eq:zeno_runs})$ is \emph{unsatisfiable}, hence no runs of the system show property (\ref{eq:zeno_runs}), and the system is devoid of Zeno runs. 


Through $\XTN$ it is possible to formalize different variations for the intuitive notion of ``until'', for example one that takes into account only the last micro-step of each macro-step, i.e. when the system reaches a ``stable state''.
Informally, $\Until{stable}{\phi,\psi}$ holds if there is a future macro-step such that in its last micro-step $\psi$ holds, and $\phi$ holds in the last micro-step of all macro-steps before that.
The $\Until{stable}{}$ operator is useful to check properties that predicate only over the ``real'' time.
It is defined by the following $\XTN$ formula:
\begin{equation} 
\label{eq:until_stable}
\Until{stable}{\phi,\psi} \defeq \Until{}{\X{st}{\true} \impl \phi, \X{st}{\true} \wedge \psi}
\end{equation}
where the last micro-step is identified by the fact that its next instant is standard.
Another possible variant of ``until'', for example, is one that predicates only over the first instants of macro-steps, i.e., standard instants.
It is defined by the following $\XTN$ formula which exploits predicate $\NowST$ of Section \ref{sec:encoding}:
\begin{equation} 
\label{eq:until_st}
\Until{st}{\phi,\psi} \defeq \Until{}{\NowST \impl \phi, \NowST \wedge \psi}
\end{equation}

We use operator $\Until{stable}{}$ to check for the existence of deadlocks in a system of synchronously evolving modules.
Our notion of deadlock is defined over macro-steps only, since we consider micro-steps to be \emph{transient} states that are non-observable outside of a module. Then, we say that the system is in deadlock if \emph{all} of its components are in a deadlock state. If $E$ is the set of components of the system, where each $e \in E$ is described through a Stateflow diagram with state space $S_e$, the following $\XTN$ formula captures this notion of deadlock: :
\begin{equation}
\label{eq:deadlock}
\bigwedge_{e \in E} \bigvee_{x \in S_e} \SomF{stable}{\AlwF{stable}{s_e = x}}
\end{equation}
where $\SomF{stable}{\phi}$ and $\AlwF{stable}{\phi}$ are, as usual, abbreviations for\linebreak
$\Until{stable}{\true,\phi}$ and $\lnot \SomF{stable}{\lnot \phi}$, respectively.

The last property we present in this paper is a real-time  property that states whether it is possible to produce and deliver one processed workpiece of any kind within $L$ time units from the system startup.
The property is captured by the following formula, with the obvious meaning of the $\WithinF{stable}{}$ operator:
\begin{equation}
\begin{array}{ll}
\label{eq:reachability}
 \WithinF{stable}{(s_{Rob} = GoToCo1) \vee (s_{Rob} = GoToCo2) , L}
\end{array}
\end{equation} 

The formula checks whether, within a time $L$ from the system startup, one of the states \emph{GoToCo1} or \emph{GoToCo2} of Figure \ref{fig:Robot} is reachable. The Stateflow diagram reaches state \emph{GoToCo1} if a workpiece of any type has been produced by machine $M_1$, similarly for the other. By testing various values for $L$, we found that the 
minimum $L$ for which formula (\ref{eq:reachability}) holds is 16. 


Performance results obtained during the verification of properties above are shown in Table \ref{tb:margins}.
Verifications was performed with a bound of 70 time units, which is a user-defined parameter that corresponds to the maximal length of runs analyzed by \zot{}.
The table shows the time spent to check the property, the memory occupation and the result, i.e. whether the property holds or not.\footnote{All tests have been performed on a 3.3GHz QuadCore PC with Windows 7 and 4GB of RAM. The verification engine used was the SMT-based \zot{} plugin of \cite{SMT}; the solver was z3 3.2 (\url{http://research.microsoft.com/en-us/um/redmond/projects/z3/}).}

\begin{table}[!bt]
\begin{center}
\begin{tabular}{cccc}
Formula & Time (sec) & Memory (Mb) & Result \\\hline
Zeno Paths detection (\ref{eq:zeno_runs}) & 85 &  264 & No\\
Deadlock detection (\ref{eq:deadlock}) & 17991 & 268 & No \\
Workpiece, L=15 (\ref{eq:reachability}) & 407 & 260 & No\\
Workpiece, L=20 (\ref{eq:reachability}) & 89 & 272 & Yes
\end{tabular}
\end{center}
\caption{Test results}
\label{tb:margins}
\end{table}

Considering that the sole Stateflow diagram of the controller of the robot arm of Figure \ref{fig:Robot} has $12 \cdotp 2^{18}$ possible configurations, i.e., $|S| \cdotp 2^{|D_V|}$, the first verification experiments are encouraging, and show the feasibility of the approach. In fact, we were able to detect deadlocks in an early specification of the FMS that stemmed from an incorrect communication protocol between the robot and machine $M_1$.

\section{Conclusions and Future Work}
\label{sec:concl}

We introduced a novel approach to the modeling and analysis of systems that evolve through a sequence of micro- and macro-steps occurring at different time scales, such that the duration of the micro-steps is negligible with respect to that of the macro-steps. In some sense, we can position our approach in between the "time granularity approach" \cite{CCM+91} where different but positive standard and comparable time scales are adopted at different levels of abstraction and the "zero-time transition" approach \cite{SCSEM}, \cite{Ost89} which instead "collapses" the duration of some action to a full zero. By introducing the notion of infinitesimal duration for micro-steps and by borrowing the elegant notation of NSA to formalize them, we overtake the limitations of the two other cases and generalize them: on the one side, unlike traditional mappings of different but positive standard time granularities, infinitesimal steps may accumulate in unbounded or unpredictable way, thus allowing for the analysis of usually pathological cases such as zeno behaviors; on the other side by imposing that the effect of an event strictly follows in time its cause, we are closer to the traditional view of dynamical system theory, and we can reason explicitly about possible synchronization between different components even at the level of micro-steps.

We pursued our approach through the novel language X-TRIO, which includes both metric operators on continuous time and the next-time 
operator to refer to the next discrete state in the computation. Under simple and realistic conditions X-TRIO can be coded into an equivalent PLTLB formulation, which makes it amenable to automatic verification. 

To demonstrate the usefulness of our approach we developed a case study where we applied X-TRIO to formalize the semantics of Stateflow, to specify through it a simple robotic cell, and to prove a few basic properties thereof.

We emphasize the generality and flexibility of our approach. Although in this paper we focused essentially on its application to formalizing (one particular semantics of) the Stateflow notation, it should be already apparent that the same path could be followed for different operational and descriptive notations and for their semantic variations. For instance, notice how we came  up in a flexible way with simple formalizations of different interpretations of the $\Until{}{}$ operator; still others could be easily devised according to the needs of different applications.

Such a generality will be pursued along several dimensions. The present choice of just one time unit for micro- and one for macro-steps is good enough for Stateflow and FMS but is not a necessary restriction: different, fixed or even variable durations for micro-steps could be used to model different components of a global system and their synchronization at the micro-level; macro-steps too could have different durations. On the other hand, non-zero infinitesimal durations for micro-steps are particularly well-suited to investigate -the risk of- dangerous behaviors such as zenoness; however, once such a pathological property has been excluded it could be useful to turn back to a finite metric of micro-steps, perhaps exploiting different time granularities: something similar occurs during hardware design where, in various contexts, the designer analyses the risk of critical races and the duration of precise finite sequences of micro-steps,  or "collapses" all such sequences in an "abstract zero-time".
Our approach allows the designer to manage all such "phases" in a uniform an general way.

Another dimension along which it is worth exploiting the generality of our approach is the issue of decidability. The trade-off between expressive power and decidability (efficiency) offers many opportunities. 
Other, more general, versions of X-TRIO possibly supported by decision algorithms different from, or complementary to, the translation into PLTLB are under investigation.



\subsubsection*{Acknowledgments.}
We would like to thank our colleagues at CNR-ITIA, Emanuele Carpanzano and Mauro Mazzolini, for providing expertise, insight and examples of design of FMS.

\bibliography{Bibliography}
\newpage

\appendix
\section{Theorem proofs}
\label{sec:proofs}

\subsection{Proof of Theorem \ref{th:undec}}
\label{subsec:proof_undec}

\begin{proof}

To demonstrate theorem \ref{th:undec} we reduce the halting problem of a 2-counter machine to the satisfiability problem of $\XTN$ formulae.
To achieve this, we define a set of $\XTN$ formulae that formalize the increment and decrement of the 2 counters.

More precisely, we associate one counter with the sequence of even standard numbers, and the other with the sequence of odd standard numbers, in the following way:

\begin{itemize}
	\item we associate two different propositional letters, $E$ and $O$, with each standard instant of $\sigma$ s.t. when the current standard instant is an even (resp. odd) integer number then only $E$ (resp. $O$) holds.
	They do not hold in non-standard instants.
	These constraints are represented by the following $\XTN$ formulae  (we show the case of even instants):
	
\begin{equation*}
\begin{array}{rl}
\label{eq:1}
E \Rightarrow \X{st}{O} \vee \X{ns}{\Until{}{\neg O \wedge \neg E, \neg \X{ns}{\true} \land \neg O \land \neg E}} \\
E \Leftrightarrow \Dist{O, 1} 
\end{array}
\end{equation*} 

Similarly for $O$.

	\item Given two consecutive standard instants $\sigma_j$ and $\sigma_i$ in $\sigma$ (i.e., such that where $\sigma_i = \sigma_j + 1$), there is a finite (possibly empty) sequence of non-standard instants between them since $\sigma$ is discrete.
	This finite sequence has length $\left|i-(j+1)\right|$.
	We indicate this subsequence of instants $\sigma_{[j, i)}$ (notice that we include in $\sigma_{[j, i)}$ standard number $\sigma_j$, but not standard number $\sigma_i$).
	We introduce suitable $\XTN$ formulae to constrain sequence $\sigma_{[j,i)}$ to be partitioned into two further subsequences in which, in each instant, propositional letter $A$ (resp. $B$) holds (in addition, $A$ and $B$ are mutually exclusive).
	We use letters $A$ and $B$ them to ``mark'' each instant in $\sigma_{[j,i)}$ as in the example of Figure \ref{fig:ProofDec}.
	The sequence of $B$s ends in the last non-standard instant of $\sigma_{[j,i)}$.
	The following $\XTN$ formulae (which exploit the fact that when $\Until{}{\phi,\psi}$ holds, $\phi$ must hold up to the instant \emph{before} $\psi$ holds) formalize the behavior above:
	
\begin{equation}
\begin{array}{ll}
\label{eq:2}
A \impl \Until{}{A \wedge \X{ns}{\true},B} \\
B \impl \Until{}{B,\neg \X{ns}{\true}} \\
A \dimpl \neg B 
\end{array}
\end{equation} 

\begin{figure}[htbf]
\centering
\includegraphics[height=.065\textheight]{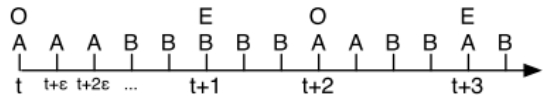}
\caption{Part of trace representing counters}
\label{fig:ProofDec}       
\end{figure}

\item We use the sequence of $A$ and $B$ to represent the two counters: the number of $A$'s starting from standard numbers marked with $E$ (resp. $O$) represent the first (resp. second) counter.
Then, we can encode the three operations \emph{increase}/\emph{decrease}/\emph{check if the counter is 0}, by manipulating the length of the sequence of $A$s in the following way (we show only the formulae of the counter of the even instants, it is similar for the other one):

 \begin{enumerate}
	\item The counter increases its current value if the sequence of $A$'s that starts at next even standard instant is such that the last $A$ of that sequence dists $2+\epsilon$ from the last $A$ of the current sequence of $A$'s.
	We can encode this condition through the following $\XTN$ formula:
	
\begin{equation*}
\begin{array}{rl}
	\label{eq:3}
		E \impl & (A \impl \Until{}{A, B \wedge \Dist{A \wedge \X{ns}{B},2}}) \\
		        & \land \\
		        & (B \impl \Dist{A \wedge \X{ns}{B},1})
\end{array}
\end{equation*} 
		
	\item The counter decreases its current value if, at the next even standard instant, the length of the sequence of $A$'s is shorter than the current one of exactly one $A$.
	We can encode this constraint through the following $\XTN$ formula:
	
\begin{equation*}
\begin{array}{rl}
	\label{eq:4}
		E \impl & (A \wedge \X{ns}{A} \impl \Until{}{A, A \land \X{ns}{A \land \X{ns}{B}} \wedge \Dist{A \land \X{ns}{B} ,2}}) \\
				& \land  \\
		        & (A \wedge \X{ns}{B} \impl \Dist{B,2})
\end{array}
\end{equation*} 
	
	The first formula describes the case where the current value of the counter is strictly greater than 1. The second formula instead describes the case where the current value of the counter is exactly 1. 
	
	\item The counter does not change its value if, at the next even standard instant, the length of the sequence of $A$'s is equal to the current one.
	We can encode this condition with following $\XTN$ formula:
	
\begin{equation*}
\begin{array}{rl}
	\label{eq:5}
		E \impl & (A \impl \Until{}{A, A \land \X{ns}{B} \wedge \Dist{A \wedge \X{ns}{B},2}} \\
				& \land \\
				& (B \impl \Dist{B,2})
	\end{array}
\end{equation*} 
	
		The first formula describes the case where the current value of the counter is strictly greater than 0. The second formula instead describes the case where the 			current value of the counter is exactly 0. 
	
	\item The counter is zero when the sequence of $A$'s is empty.
	In the case of the counter associated with even standard numbers we can encode this check with the following $\XTN$ formula:

\begin{equation*}
\begin{array}{rl}
	\label{eq:6}
		E \wedge B
	\end{array}
\end{equation*} 
			
 \end{enumerate}

\item Finally, at the initial instant of the sequence $\sigma$, which is an even number, $E$ holds and the corresponding counter value is 0. This is modeled by the following $\XTN$ formula evaluated at instant 0:

	\begin{equation}
	\begin{array}{ll}
	\label{eq:7}
		E \wedge B
	\end{array}
	\end{equation} 
	
\end{itemize}		

$\XTN$ formulae \eqref{eq:1}--\eqref{eq:7} formalize the core mechanisms of a 2-counter machine that can decide to increase/decrease or leave unchanged the values of the counters on the basis of the set of atomic propositions that are true in a given instant of time, which are used to represent the current state of the machine.
From this, the halting of the formalized machine can be expressed as a simple reachability of a final state.
Hence, we can conclude that the satisfiability problem of $\XTN$ is undecidable.
\proofend
\end{proof}

\subsection{Proof of Theorem \ref{th:correctEncAsim}}
\label{subsec:proof_correctEncAsim}

In order to prove Theorem \ref{th:correctEncAsim}, we first need to introduce two intermediate results.

\begin{lemma}
\label{lm:zeno}
Given an $\XTN$ formula $\phi$ in which all subformulae have the form $\psi \land \SomF{}{\X{st}{\true} \lor \X{ns}{\true}}$, and given two structures $S_1= \langle \XTNdom, \beta_1, \sigma \rangle$, $S_{2}= \langle \XTNdom, \beta_{2}, \sigma \rangle$ (i.e., which have the same history $\sigma$) such that, for all $t \in \XTNdom$ for which there is $i \in \naturals$ such that $t < \sigma_i$, it is $\beta_1(t) = \beta_2(t)$, then $S_1, 0 \models \phi$ iff $S_2, 0 \models \phi$.
\end{lemma}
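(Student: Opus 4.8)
The plan is to proceed by structural induction on the subformulae of $\phi$, using as the load-bearing observation the fact that the ``Zeno guard'' $\zeta \defeq \SomF{}{\X{st}{\true} \lor \X{ns}{\true}}$, which by assumption is conjoined to every subformula, is true exactly on the part of the domain ``covered'' by the history, and, crucially, that its truth value depends only on $\sigma$ and not on the interpretation. Concretely, I would first prove the auxiliary characterization that, for any structure $S = \langle \XTNdom, \beta, \sigma \rangle$ and any $t$, $S, t \models \X{st}{\true} \lor \X{ns}{\true}$ holds iff there is $j$ with $\sigma_j \leq t < \sigma_{j+1}$ (call such a $t$ \emph{covered}): this is immediate by unfolding the semantics of $\X{st}{}$ and $\X{ns}{}$ and observing that $\sigma_{j+1}$ is either standard or non-standard, so one of the two disjuncts fires. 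Since the covered instants form the initial segment $\bigcup_j [\sigma_j, \sigma_{j+1}) = \{t : t < \sup_j \sigma_j\}$, which is downward closed, $S, t \models \zeta$ (that is, $\exists d \geq 0$ with $t+d$ covered) holds iff $t$ itself is covered. This matches the hypothesis on $\beta_1, \beta_2$ exactly: since $\sigma_0 = 0 \leq t$, the condition ``$\exists i$ with $t < \sigma_i$'' is equivalent to $t$ being covered, so $\beta_1$ and $\beta_2$ agree precisely on the covered region.

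Next I would prove, by structural induction on subformulae $\psi = \chi \land \zeta$ of $\phi$, the strengthened statement that $S_1, t \models \psi$ iff $S_2, t \models \psi$ \emph{for every} $t \in \XTNdom$; the strengthening to all $t$ rather than only $t=0$ is what lets the modal operators appeal to the inductive hypothesis at the instants they reach. If $t$ is not covered, the guard $\zeta$ fails in both $S_1$ and $S_2$ by the characterization above, so $\psi$ is false in both and the equivalence holds trivially. If $t$ is covered, $\zeta$ holds in both (again only because of $\sigma$), so it suffices to match the core $\chi$. Here I would run through the cases on the top operator of $\chi$: for an atomic proposition $p$ or for $\NowST$ the truth value depends only on $\beta_i(t)$ (equal, since $t$ is covered) or on $\st{t}$ (independent of $\beta$); for the Boolean connectives and for $\Dist{\cdot,1}, \Dist{\cdot,-1}, \Dist{\cdot,\epsilon}, \Until{}{}, \Since{}{}, \X{st}{}, \X{ns}{}$ the value of $\chi$ at $t$ is determined by the truth of proper subformulae at instants selected purely from $t$ and $\sigma$ (including the standard/non-standard tests on the relevant $\sigma_j$), so it agrees across $S_1$ and $S_2$ by the inductive hypothesis. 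Instantiating the result at $t = 0 = \sigma_0$, which is covered, yields the lemma.

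The main obstacle -- really the only point requiring care -- is the auxiliary characterization of the guard $\zeta$, and in particular the downward-closedness argument: one must verify that ``some future instant is covered'' collapses to ``the current instant is covered'', using reflexivity via $d=0$ in one direction and the initial-segment shape of the covered region in the other, and that this reasoning is genuinely independent of $\beta$ so that the two structures see the same guard. A secondary point to handle cleanly is that, under the convention that \emph{every} subformula carries the guard, negation is no longer classical on the uncovered region (both a formula and its negation evaluate to false there); this is harmless for the lemma, since we only ever assert agreement between $S_1$ and $S_2$, but the argument must be phrased so that the uncovered case is dispatched by the outermost guard before the Boolean structure of $\chi$ is inspected.
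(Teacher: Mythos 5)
Your proposal is correct and follows essentially the same route as the paper's proof: strengthening the claim to all instants $t$, using the guard $\SomF{}{\X{st}{\true} \lor \X{ns}{\true}}$ to dispatch uniformly the instants past the accumulation point, and then arguing by structural induction with the same case analysis on the operators. The only difference is cosmetic and arguably an improvement: where the paper introduces the minimum $\overline{t}$ of the uncovered instants and splits on $t < \overline{t}$ versus $t \geq \overline{t}$, you characterize the guard via the downward-closed ``covered'' region $\bigcup_j [\sigma_j, \sigma_{j+1})$, which avoids having to establish that the uncovered set has a (standard) minimum.
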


\begin{proof}
We show a stronger result, from which Lemma \ref{lm:zeno} descends as corollary.
More precisely, we show that, given any $t \in \XTNdom$, $S_1, t \models \phi$ iff $S_2, t \models \phi$.
First of all, we remark that, if for each $t \in \XTNdom$ there is a $\sigma_i$ such that $t < \sigma_i$, then for
all $t \in \XTNdom$ it is $\beta_1(t) = \beta_2(t)$, hence the desired result.
In addition, notice that, in this case, condition $\SomF{}{\X{st}{\true} \lor \X{ns}{\true}}$ is true for all $t \in \XTNdom$, so the value of $\phi$ does not depend on it.

In the rest of the proof we consider the case in which there are instants $t$ such that, for all $i$, $\sigma_i < t$.
The set of such instants can be shown to have a minimum, which we indicate with $\overline{t}$, such that $\st{\overline{t}}$.
Then, history $\sigma$ accumulates at $\overline{t}$, and we separate two cases: $t < \overline{t}$ and $t \geq \overline{t}$.
In the case $t \geq \overline{t}$, $\SomF{}{\X{st}{\true} \lor \X{ns}{\true}}$ is false, hence for all $\phi$ both $S_1, t \not\models \phi$ and $S_2, t \not \models \phi$.
Then, we only need to consider the case $t < \overline{t}$.
The rest of the proof is by induction on the structure of $\phi$: consider a subformula $\psi$ of $\phi$.

If $\psi = p$, by hypothesis $\beta_1(t) = \beta_2(t)$; hence the result.

The cases $\psi = \neg \zeta$ and $\psi = \psi_1 \land \psi_2$ are trivial.

If $\psi = \Dist{\zeta, 1}$, then $S_1, t \models \psi$ iff $S_1, t+1 \models \zeta$, hence, by inductive hypothesis, iff $S_2, t+1 \models \zeta$, and iff $S_2, t \models \psi$.
Similarly for $\Dist{\zeta, -1}$ and $\Dist{\zeta, \epsilon}$.

If $\psi = \Until{}{\psi_1,\psi_2}$, $S_1, t \models \psi$ iff there is $t' \geq t$ such that $S_1, t' \models \psi_2$, and for all $t \leq t'' < t'$ it is $S_1, t \models \psi_1$; by inductive hypothesis this occurs iff $S_2, t' \models \psi_2$, and for all $t \leq t'' < t'$ it is $S_2, t \models \psi_1$, i.e., iff $S_2, t \models \psi$.
The case $\Since{}{\psi_1,\psi_2}$ is similar.

If $\psi = \X{st}{\zeta}$, then $S_1, t \models \psi$ iff there is $i \in \naturals$ such that $\st{\sigma_{i+1}}$, $\sigma_i < t \leq \sigma_{i+1}$ and $S_1, \sigma_{i+1} \models \zeta$; by inductive hypothesis this holds iff $S_2, \sigma_{i+1} \models \zeta$, hence the result.
Similarly for $\X{ns}{\zeta}$.
\proofend
\end{proof}

As a consequence of Lemma \ref{lm:zeno}, and also of the next result, given the restrictions introduced in Section \ref{sec:encoding}, in order to determine whether an $\XTN$ formula is satisfiable we need only focus on the sequence $\sigma$, and we can disregard the instants following an accumulation point, if any.

We introduce the following further intermediate result, in which we show that, in each interval $(\sigma_i, \sigma_{i+1})$ such that $\st{\sigma_{i+1}}$, the subformulae of $\phi$ have the same value in all $t \in [\sigma_i, \sigma_{i+1})$.

\begin{lemma}
\label{lm:ns_inv}
Given an $\XTN$ formula $\phi$ and a structure $S= \langle \XTNdom, \beta, \sigma \rangle$, if $\st{\sigma_{i+1}}$, then for any two instants $j, k \in \XTNdom$ such that $\ns{j}$, $\ns{k}$, and $\sigma_i \leq j < k < \sigma_{i+1}$, $S,j \vDash \phi$ iff $S,k \vDash \phi$.
\end{lemma}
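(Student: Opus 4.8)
The plan is to prove the statement by structural induction on $\phi$, keeping the interval $[\sigma_i,\sigma_{i+1})$ (with $\st{\sigma_{i+1}}$) fixed and quantifying over all non-standard $j,k$ with $\sigma_i \leq j < k < \sigma_{i+1}$. First I would record the structural facts that drive every case. By the definition of the history $\sigma$ together with constraints C1 and C2, no element of $\sigma$ lies strictly between $\sigma_i$ and $\sigma_{i+1}$, so $\beta$ is constant and equal to $\beta(\sigma_i)$ on the whole interval; moreover, writing $\sigma_i = v + m\epsilon$ and $\sigma_{i+1} = v+1$, the non-standard instants of the interval are exactly the points $v+n\epsilon$ with $n \geq \max(m,1)$, of which there are infinitely many. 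Fixing the names $j = v+n\epsilon$ and $k = v+n'\epsilon$ with $\max(m,1) \leq n < n'$, the base case $\phi = p$ is then immediate, since $\beta(j) = \beta(k) = \beta(\sigma_i)$, and the Boolean cases $\neg\psi$ and $\psi_1 \wedge \psi_2$ follow directly from the inductive hypothesis.

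Next I would dispatch the operators that stay \emph{inside} the current interval. For $\NowST$ the claim is trivial, since $\ns{j}$ and $\ns{k}$ make it false at both points. For $\X{st}{\psi}$ and $\X{ns}{\psi}$ I would use that $j$ and $k$ lie in the very same $\sigma$-interval (index $i$, by disjointness of the $[\sigma_l,\sigma_{l+1})$): the truth of $\X{st}{\psi}$ reduces to ``$\st{\sigma_{i+1}}$ and $S,\sigma_{i+1} \vDash \psi$'' and that of $\X{ns}{\psi}$ to ``$\ns{\sigma_{i+1}}$ and $S,\sigma_{i+1} \vDash \psi$'', neither condition depending on the chosen instant, so both agree at $j$ and $k$ (the second being simply false here, since $\st{\sigma_{i+1}}$). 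For $\Dist{\psi,\epsilon}$ the shift remains within the interval, because $j+\epsilon = v+(n+1)\epsilon < v+1$ is again a non-standard instant of $[\sigma_i,\sigma_{i+1})$, so the inductive hypothesis applied to $\psi$ at $j+\epsilon$ and $k+\epsilon$ closes the case. For $\Until{}{\psi_1,\psi_2}$ and $\Since{}{\psi_1,\psi_2}$ I would observe that the portion of the future (resp. past) lying beyond $\sigma_{i+1}$ (resp. before $\sigma_i$) is literally shared by $j$ and $k$, whereas on the local stretch $[j,k)$ the subformulae $\psi_1,\psi_2$ are constant by the inductive hypothesis; a witness distance for one point can then be converted into a witness for the other by adding or subtracting $k-j = (n'-n)\epsilon$, with the degenerate zero-distance case handled by the constancy of $\psi_2$.

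The hard part, and the place where I expect the real work to be, is the pair of unit-shift operators $\Dist{\psi,1}$ and $\Dist{\psi,-1}$. Here the evaluation point $j\pm1 = (v\pm1)+n\epsilon$ \emph{leaves} the interval $[\sigma_i,\sigma_{i+1})$ and falls into an adjacent macro-region, so the inductive hypothesis for the current interval no longer applies directly. To close these cases I would strengthen the induction to a global statement --- the conclusion holding simultaneously for every interval $[\sigma_l,\sigma_{l+1})$ with $\st{\sigma_{l+1}}$ --- and then argue that $j\pm1$ and $k\pm1$ both lie inside one and the same such sub-interval of the neighbouring region, so that the global inductive hypothesis yields $S,j\pm1 \vDash \psi$ iff $S,k\pm1 \vDash \psi$. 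Establishing this alignment is exactly the delicate point: it requires reading off from C1 and C2 the micro-step structure of the adjacent region and checking that the two shifted instants are not separated by an intervening standard instant or by the boundary of the micro-step block that the neighbouring macro-step produces. This is where the constraints on $\sigma$ must do genuine work, and I would isolate it as a separate structural sublemma about how unit shifts map the non-standard instants of one interval into a single terminating sub-interval of the next, treating it as the crux of the whole argument.
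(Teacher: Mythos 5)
Your induction matches the paper's proof case-for-case everywhere except at the point you yourself single out as the crux, and there the proposal breaks down. The paper disposes of $\Dist{\psi,1}$ and $\Dist{\psi,-1}$ in one line: under the restrictions of Section \ref{sec:encoding}, where the evaluation of the unit-displacement operator is made meaningful only in standard instants, $\Dist{\psi,1}$ (and symmetrically $\Dist{\psi,-1}$) is \emph{by convention false at non-standard instants} --- this is also visible in the translation, where $\EncAsim{\Dist{\phi,-1}}$ conjoins $\ST$ at the current position --- so both formulae are false at $j$ and at $k$ and the case is trivial. You instead keep the literal semantics at non-standard instants and try to close the case via a strengthened global induction resting on an alignment sublemma: that $j+1$ and $k+1$ fall into one and the same sub-interval $[\sigma_l,\sigma_{l+1})$ of the adjacent macro-region. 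That sublemma is false, because nothing in constraints \ref{enum:c1} and \ref{enum:c2} synchronizes the number of micro-steps across macro-regions. Concretely, take $\sigma = \langle 0,\, 1,\, 1+\epsilon,\, 1+2\epsilon,\, 2,\, 3, \ldots \rangle$, which satisfies \ref{enum:c1} and \ref{enum:c2}, and let $p$ hold exactly at $t = 1+\epsilon$ (the history definition permits this, since $1+2\epsilon$ is a history point where $\beta$ may change). For $j = \epsilon$ and $k = 3\epsilon$, both non-standard in $[\sigma_0,\sigma_1) = [0,1)$ with $\st{\sigma_1}$, one gets $S,j \vDash \Dist{p,1}$ (since $j+1 = 1+\epsilon$) but $S,k \nvDash \Dist{p,1}$ (since $\beta(1+3\epsilon) = \beta(1+2\epsilon) \not\ni p$). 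So under the literal semantics the lemma itself fails, and no amount of structural work on $\sigma$ can repair your route; the missing ingredient is not an alignment property but the stipulation that $\Dist{\cdot,\pm 1}$ evaluates to false off the standard instants.

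For the remaining cases your argument is sound and coincides with the paper's: the base case via constancy of $\beta$ on $[\sigma_i,\sigma_{i+1})$, the Boolean cases, $\Dist{\psi,\epsilon}$ by observing the shift stays inside the interval, $\X{st}{\psi}$ and $\X{ns}{\psi}$ by reduction to the single point $\sigma_{i+1}$ (the latter false outright since $\st{\sigma_{i+1}}$), and $\Until{}{}$/$\Since{}{}$ by constancy of the subformulae on the stretch between $j$ and $k$ (your witness-shifting phrasing is a mild variant of the paper's, and works given that constancy). You even handle $\NowST$ explicitly, which the paper's proof omits, though it is trivially false at both $j$ and $k$. The fix is therefore local: replace your proposed sublemma with the convention above, after which the two unit-shift cases become immediate and the rest of your proof stands as written.
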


\begin{proof}
First of all, notice that, by constraint \ref{enum:c1}, $\sigma_i \geq \sigma_{i+1}-1$, $k > \sigma_i$ actually implies that $\ns{k}$; the only case in which it can be $\st{j}$ is when $j = \sigma_i$ and $\st{\sigma_i}$.

The proof proceeds by induction on the structure of $\phi$.

If $\phi = p \in AP$, then $p \in \beta(j)$ iff $p \in \beta(k)$, as $\beta(j) = \beta(k)$ by definition of $\sigma$, hence the result.

The cases $\phi = \neg \psi$ and $\phi = \phi_1 \land \phi_2$ are trivial.

If $\phi = \Dist{\psi, 1}$, then both $S,j \nvDash \phi$ and $S,k \nvDash \phi$, as $\Dist{\psi, 1}$ is by convention false in non-standard instants. 
Similarly when $\phi = \Dist{\psi, -1}$.

If $\phi = \Dist{\psi, \epsilon}$, then $S,j \vDash \phi$ iff $S,j+\epsilon \vDash \psi$ and $S,k \vDash \phi$ iff $S,k+\epsilon \vDash \psi$.
Since $\sigma_i < j+\epsilon < k+\epsilon < \sigma_{i+1}$, $\ns{j+\epsilon}$ and $\ns{k+\epsilon}$, then by inductive hypothesis $S,j+\epsilon \vDash \psi$ iff $S,k+\epsilon \vDash \psi$, hence the result.

If $\phi = \Until{}{\psi_1, \psi_2}$, we have that $S,k \vDash \phi$ iff there is a $t \geq k$ s.t. $S,t \vDash \psi_2$, and for all $k \leq t' < t$ it is $S,t' \vDash \psi_1$.
By inductive hypothesis, for all $t'$, $t''$ s.t. $\sigma_i \leq j \leq t'' < k \leq t' < \sigma_{i+1}$ where $\ns{j}$, we have that $S,t' \vDash \psi_1$ iff $S,t'' \vDash \psi_1$.
Hence, $S,t' \vDash \psi_1$ holds for all $k \leq t' < t$ iff also for all $j \leq t'' < t$ it is $S,t'' \vDash \psi_1$.
Then, $S,k \vDash \phi$ iff $S,j \vDash \phi$.
The case $\phi = \Since{}{\psi_1, \psi_2}$ is similar.

If $\phi = \X{st}{\psi}$, $S,j \vDash \phi$ iff $S,\sigma_{i+1} \vDash \psi$, as $\st{\sigma_{i+1}}$.
We have also $S,k \vDash \phi$ iff $S,\sigma_{i+1} \vDash \psi$, hence the result.

If $\phi = \X{ns}{\psi}$, both $S,j \nvDash \phi$ and $S,k \nvDash \phi$, as $\st{\sigma_{i+1}}$.

\proofend
\end{proof}

\end{document}